\documentclass[12pt]{article}
\usepackage{graphicx}
\usepackage{amssymb,amsmath,amsfonts,graphics, setspace}
\usepackage{fullpage}
\usepackage{amsthm}

\usepackage{tikz}

\newtheorem{theorem}{Theorem}[section]

\newtheorem{proposition}{Proposition}[theorem]
\newtheorem{lemma}[theorem]{Lemma}
\newtheorem{example}[theorem]{Example}
\newtheorem{definition}[theorem]{Definition}

\newcommand{\xx}{\mathbf{x}}
\newcommand{\yy}{\mathbf{y}}
\newcommand{\uu}{\mathbf{u}}

\newcommand{\vv}{\mathbf{v}}

\newenvironment{frcseries}{\fontfamily{frc}\selectfont}{}

\usepackage[absolute,overlay]{textpos}

\newcommand{\fig}{\includegraphics}

\newcounter{comments}
\title{Dynamics Decomposition of Boolean Networks: An algebraic foundation}

\author{Alan Veliz-Cuba$^1$ 
\and
Claus Kadelka$^2$
\and
David Murrugarra$^3$
\and
Matthew Wheeler$^4$
\and
Kiara Boehringer$^1$
\and
Benjamin Coberly$^{2,5}$
\and
Reinhard Laubenbacher$^4$
}

\begin{document}

\maketitle

{\footnotesize
     \centerline{$^1$Department of Mathematics,
      University of Dayton, Dayton, OH 45469, USA}
}
{\footnotesize
     \centerline{$^2$Department of Mathematics, Iowa State University, Ames, IA 50011}
}
{\footnotesize
     \centerline{$^3$Department of Mathematics, University of Kentucky, 101 Main Building, Lexington, KY 40506}
}
{\footnotesize
     \centerline{$^4$Department of Medicine,
      University of Florida, Gainesville, FL 32610, USA}
}

{\footnotesize
     \centerline{$^5$Department of Computer Science, Iowa State University, Ames, IA 50011}
}

\begin{abstract}

Understanding the dynamics of Boolean networks is central to problems such as network reduction, design, control, and reverse engineering. As Boolean network models continue to grow in size and complexity, it becomes increasingly important to decompose networks into modules in a manner that is compatible with their dynamics. 
In this paper, we show that endowing the space of possible dynamics with a semiring structure enables a systematic decomposition of the dynamics of any Boolean network in terms of the dynamics of its constituent modules.
This algebraic framework provides a systematic way to analyze how local dynamical behaviors combine to produce global dynamics. Our results establish a concrete algebraic foundation for network modularity and introduce new mathematical tools for the study of complex Boolean networks, and opens the door to the application of algebraic methods to problems of network analysis, decomposition, and control.

\end{abstract}

\section{Introduction}
Since the beginnings of general systems theory in the 1950s, the relationship between structural features of a system and its dynamics has been of special interest. Systems are generally viewed as collections of many interacting components that generate system dynamics. In applications of systems theory, such as in systems biology or social science, insights into this relationship can be of practical interest, for instance for the purpose of structural interventions that are intended to modify the system's dynamics in desirable ways. The hypothesis of modularity, in particular, has been studied extensively from the beginning~\cite{hartwell1999molecular,wagner2007road}. Complex systems are assumed to be composed of subsystems called modules, and this modular structure is supposed to confer favorable traits on system dynamics, such as robustness to perturbations. 

In \cite{kadelka2023modularity}, we laid the foundation for a mathematical framework to study this notion in a rigorous fashion. We did this in the context of Boolean networks, a common modeling paradigm for applications to biology, engineering, physics, and computer science~\cite{schwab2020concepts,kadelka2024meta}. We defined a module of a Boolean network to be a maximal subset of nodes that are all connected through directed feedback loops, that is, directed loops in the dependency graph of the Boolean network variables. That is, a module is a strongly connected component in the dependency graph. The way these modules are connected can be seen in the partially ordered set called the condensation graph, which shows the dependency among strongly connected components.  Fig.~\ref{fig:modular_schematic} shows that each module can be seen as a network with potential inputs from upstream components and potential outputs that are fed to downstream components.  The main result in \cite{kadelka2023modularity} is that a decomposition of a Boolean network into the collection of strongly connected components induces a similar decomposition of the limit cycles in the state transition graph of the Boolean network. This result suffers from two shortcomings. First, it does not provide a modular decomposition of the entire state transition graph, but only the limit cycles. Second, the proof is a ``brute force'' verification of the decomposition, without the use of more advanced mathematical methods.

In this paper, we remedy both these shortcomings. We introduce a new mathematical framework to treat the state transition graph of a Boolean network. Instead of the customary representation as a directed graph with nodes all possible states of the network, and directed edges representing the transition from one state to another by applying the Boolean rules, we represent the dynamics as a collection of trajectories. Here, a trajectory is a directed path in the state transition graph that starts at a given initial state and ends when the trajectory becomes periodic. We then introduce algebraic operations on the set of trajectories, endowing it with the structure of a semiring. These algebraic operations extend previous approaches that were restricted to autonomous systems \cite{injectivitypoly2025,factorisationsemiring2024,rootsinsemiring2024}, making our framework applicable to non-autonomous networks. We then utilize the algebraic properties of this semiring to extend the decomposition result in \cite{kadelka2023modularity} from the limit cycles to the entire state transition graph.

In \cite{kadelka2023modularity}, the networks corresponding to strongly connected components in the wiring diagram were proposed as the modules of a Boolean network. The way these modules are connected can be seen in the partially ordered set called the condensation graph. This graph shows the dependency among strongly connected components.  Fig.~\ref{fig:modular_schematic} shows that each subnetwork corresponding to the strongly connected components, one can consider each module as a network with potential inputs from upstream components and potential outputs that are fed to downstream components.

\begin{figure}[h]
    \centering
    \fig[scale=0.6]{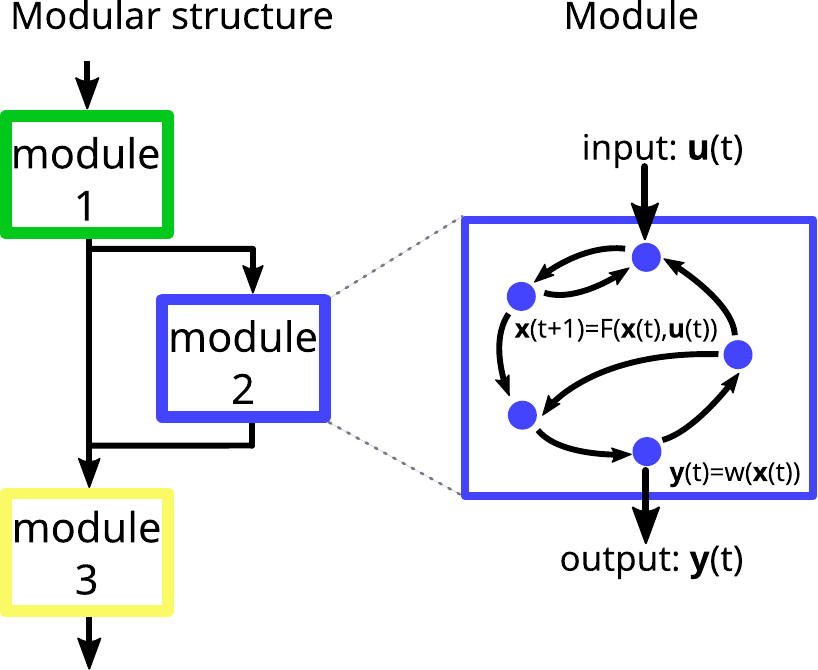}
    \caption{Schematic of modularity. Left: A collection of modules that interact. Right: Each module can be seen as a Boolean network that receives a time-dependent input from upstream modules, $\uu(t)$, evolves according to the rule $\xx(t+1)=F(\xx(t),\uu(t))$, and sends an output to downstream modules $\yy(t)=w(\xx(t))$. In practice $w$ is simply a projection on some of the variables of $\xx(t)$.
    }    \label{fig:modular_schematic}
\end{figure} 

In order to understand the dynamics of a network in terms of the dynamics of the modules, we need to have a standard representation of these, Fig.~\ref{fig:module}. 
For networks without inputs, the dynamics evolve according to the equation $\xx(t+1)=F(\xx(t))$ and are represented by the state transition graph which has $2^n$ vertices ($n$ is the number of variables). 
For networks with an external input $p$ that is fixed (representing e.g. an environmental condition), the dynamics evolve according to the equation $\xx(t+1)=F(\xx(t),p_0)$ and are represented by the state transition graph which has $2^n$ vertices (one such graph for each parameter value).
For networks with external input that is not fixed in time, the dynamics evolve according to the rule $\xx(t+1)=F(\xx(t),\uu(t))$ and there is no standard way  to represent the dynamics as a directed graph. 

\begin{figure}[h]
    \centering
    \fig[scale=0.5]{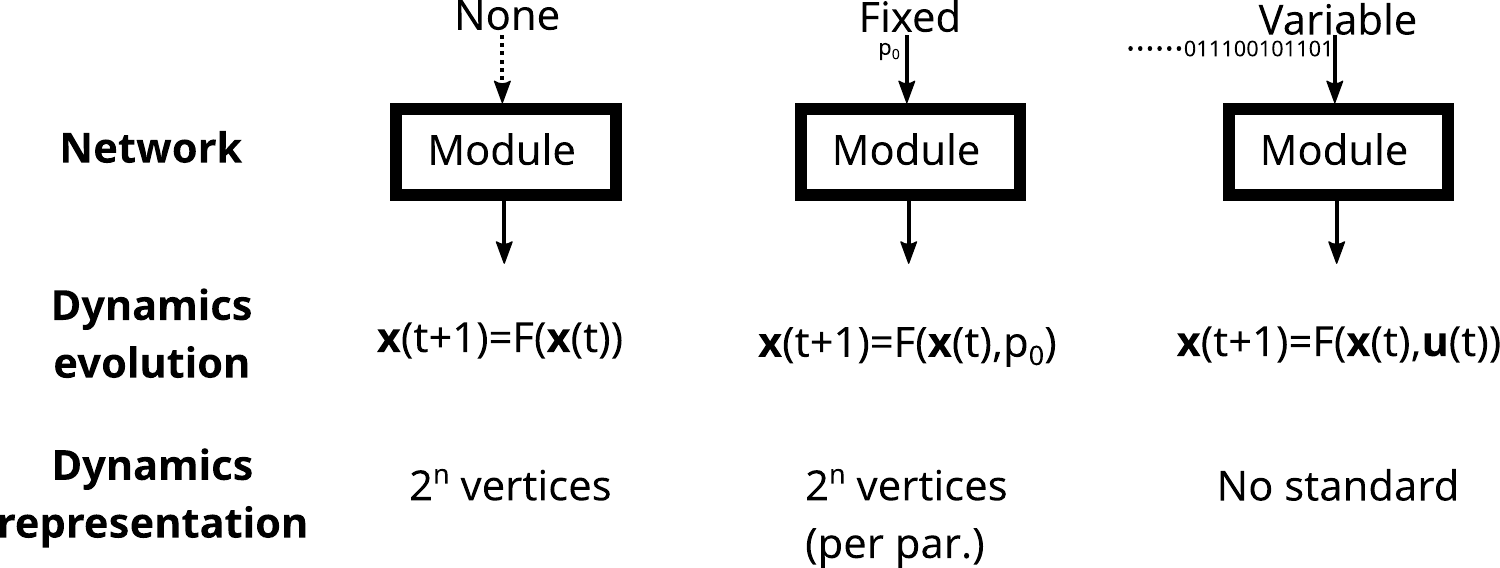}
    \caption{Different types of Boolean networks ($n$ variables). For networks with no inputs the dynamics is represented by the state transition graph which has $2^n$ vertices. For networks with an external fixed input $p$ the dynamics is represented by the state transition graph which has $2^n$ vertices (one such graph for each parameter value).
    For networks with variable external input there is no standard way used to represent the dynamics as a directed graph.
    }
    \label{fig:module}
\end{figure}

Our main contributions are the following. First, we formalize the concept of state transition graph for non-autonomous networks which generalize the standard state transition graph for typical Boolean networks (Sections  \ref{sec:sum}, \ref{sec:non-aut-intro}). Second, we present an algebraic framework that gives the space of possible dynamics the structure of a semiring (sum and product defined in Sections \ref{sec:sum}, \ref{sec:non-aut-intro}, and \ref{sec:prod_graphs}). Third, we use our framework to find a decomposition of the dynamics in terms of the dynamics of its modules (Theorem \ref{thm:main}). This ``factorization'' of dynamics provides a rigourous framework to study Boolean networks from a systems point of view, and in particular provides tools for the formalization of modularity.  We remark that we use  the framework of Boolean networks ($F:\{0,1\}^n\rightarrow\{0,1\}^n$) for clarity of presentation, but all results extend to finite dynamical systems ($F:X_1\times\ldots\times X_n\rightarrow X_1\times\ldots\times X_n$), including logical models and polynomial dynamical systems.

\section{Background and Preliminary Definitions}

\subsection{Dynamics of autonomous networks}\label{sec:dynamics}
\begin{definition}
A trajectory of a Boolean network $F:\{0,1\}^n\rightarrow \{0,1\}^n$ is a sequence $(\xx(t))_{t=0}^\infty$ of elements of $\{0,1\}^n$ such that $\xx(t+1)=F(\xx(t))$ for all $t\geq 0$.
\end{definition}

\begin{example}

Consider $F(x_1,x_2,x_3)=(x_2 \wedge \neg x_3,x_3,\neg x_1 \wedge x_2)$. There are 8 possible initializations for the trajectories (all binary strings with 3 elements), and they result in the following 8 trajectories (commas and parentheses for states are omitted for brevity; bold text indicates the first instance of the periodic part).
\end{example}

 $T_{000}=(\textbf{000},000,000,000,000,\ldots)$
 
 $T_{001}=(001,\textbf{010},\textbf{101},010,101,\ldots)$
 
 $T_{010}=(\textbf{010},\textbf{101},010,101,010,\ldots)$ 
 
 $T_{011}=(\textbf{011},011,011,011,011,\ldots)$
 
 $T_{100}=(100,\textbf{000},000,000,000,\ldots)$
 
 $T_{101}=(\textbf{101},\textbf{010},101,010,101,\ldots)$
 
 $T_{110}=(110,100,\textbf{000},000,000,\ldots)$ 
 
 $T_{111}=(111,\textbf{010},\textbf{101}, 010 , 101,\ldots)$

We can see that $T_{010}$ and $T_{101}$ are periodic trajectories with period 2. Similarly, $T_{000}$ and $T_{011}$ are periodic and have period 1. These trajectories are shown in Fig.~\ref{fig:dynamics_eg}b. A more compact way to encode the trajectories is shown in Fig.~\ref{fig:dynamics_eg}c, where the periodic pattern has been encoded as a cycle. Note that the initial states are marked by rectangles. Although such markers may be omitted for some trajectories, they are necessary for $T_{010}$ and $T_{101}$ to specify the initial state. Using an appropriate quotient of graphs we can assume that the graphs in Fig.~\ref{fig:dynamics_eg}c \textbf{are} the trajectories in Fig.~\ref{fig:dynamics_eg}b.

For autonomous systems, there is a compact way to represent the dynamics and obtain the ``state transition graph'',  that is some times called  the state space, Fig.~\ref{fig:dynamics_eg}d

\begin{definition}\label{def:stg_classic}
The \emph{state transition graph} (STG) of a Boolean network $F:\{0,1\}^n\rightarrow \{0,1\}^n$ is a directed graph with vertices in $\{0,1\}^n$ and an edge from $\xx$ to $\yy$ if $F(\xx)=\yy$. Unless specified otherwise, by graphs, we refer to directed graphs. 
\end{definition}

\begin{example}
Consider again the Boolean network given by 
$F(x_1,x_2,x_3)=(x_2 \wedge \neg x_3,x_3,\neg x_1 \wedge x_2)$. Its state transition graph is shown in Fig.~\ref{fig:dynamics_eg}b.
\end{example}

\begin{definition}\label{def_attractors}
An \emph{attractor of length $r$} is a set with $r$ elements, $\mathcal{C}=\{c_1,\ldots,c_r\}$, such that $F(c_1)=c_2, F(c_2)=c_3,\ldots, F(c_{r-1})=c_r, F(c_r)=c_1$.
\end{definition}

\begin{example}\label{eg:wd_dyn}
The network from the previous example, $F(x_1,x_2,x_3)=(x_2 \wedge \neg x_3,x_3,\neg x_1 \wedge x_2)$, has 2 fixed points (i.e., attractors of length 1) and one cycle of length 2, which can be easily seen in the state transition graph representation (Fig.~\ref{fig:dynamics_eg}d).
\end{example}

\begin{figure}[h]
    \centering
    \fig[scale=1.5]{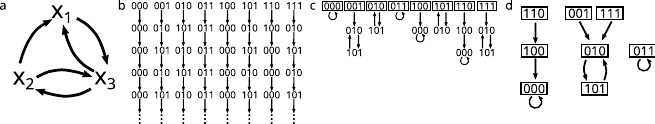}
    \caption{Wiring diagram and state transition graph of the Boolean network in Example~\ref{eg:wd_dyn}. 
    (a) The wiring diagram encodes the dependency between variables. The edges $x_3\rightarrow x_1$ and $x_1\rightarrow x_3$ are negative because $f_1=x_2 \wedge \neg x_3$ is decreasing in $x_3$ and $f_3=\neg x_1\wedge x_2$ is decreasing in $x_1$.
    (b) Trajectories represented as paths.
    (c) By using cycles, we can represent the trajectories as finite graphs. Note that we must indicate which state is the initial state to avoid ambiguity (e.g. the third and sixth graphs are not the same trajectory). Here and throughout the paper, states within rectangles denote initial states. By an appropriate quotient of graphs in (b) we can say the graphs in (c) \textbf{are} the trajectories.  
    (d) The state transition graph is a graph with edges between inputs and outputs.
    }
    \label{fig:dynamics_eg}
\end{figure}

\subsection{Sum of trajectories}\label{sec:sum}

In the rest of the section, by a trajectory $T_ {\xx,F}$ we mean either the sequence $(\xx,F(\xx),\ldots)$ (e.g. Fig.~\ref{fig:dynamics_eg}b) or its compact representation (e.g. Fig.~\ref{fig:dynamics_eg}c). Since they are the same up to a graph quotient, there will be no ambiguity. Also, whenever there is no ambiguity, we will denote $T_ {\xx,F}$ simply by $T_ {\xx}$. We now discuss another way to represent the trajectories in Fig.~\ref{fig:dynamics_eg}c that will motivate an operation between graphs.

We first look at $T_{000}$ and $T_{110}$  in Fig.~\ref{fig:dynamics_eg}bc. We can see that $T_{110}$ contains the information needed to encode $T_{000}$ (the fact that 000 transitions to itself). Note that here we are implicitly using the fact that the network is autonomous; since 000 transitions to itself in the third step of $T_{110}$, 000 transitions to itself regardless of the step number. In fact, we can say that $T_{110}$ actually ``contains'' $T_{000}$ in Fig.~\ref{fig:dynamics_eg}c (including the initial-state label, the rectangle). A graph that can encode both is shown in Fig.~\ref{fig:sum_intuition}a. Note that we keep the marker that specifies which vertices are initial states. Now we look at $T_{001}$  and $T_{111}$ in Fig.~\ref{fig:dynamics_eg}bc. We can see that there is an ``overlap'' and hence we can use a combined graph that encodes both. This process of combining trajectories to obtain a minimal representation will result in the graph in  Fig.~\ref{fig:sum_intuition}b. 

\begin{figure}[h]
    \centering
    \fig[scale=1.5]{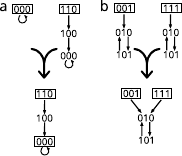}
    \caption{Representing more than one trajectory by a single graph. 
    (a) Since the first graph is contained in the second, we can simply indicate (by putting a rectangle around 000) that 000 is also a valid initial state.
    (b) Since the graphs overlap, we can merge the overlap into a single instance.
    }
    \label{fig:sum_intuition}
\end{figure}

This combination of graphs motivates the following formal definitions.
\begin{definition}\label{def:graph}
In the rest of the manuscript, by a \emph{graph} we mean $G=(V,E,I)$, where $V\subseteq \{0,1\}^n$ is the set of states, $E\subseteq V\times V$ is the set of (directed) edges, and $I\subseteq V$ is the set of initial states. We also say that a graph $G'=(V',E',I')$ is contained in $G$ if $V'\subseteq V$, $E'\subseteq E$, and $I'\subseteq I$.
\end{definition}

\begin{definition}\label{def:sum_basic}
Consider the graphs $G_1=(V_1,E_1,I_1)$ and $G_2=(V_2,E_2,I_2)$. The \emph{sum} of $G_1$ and $G_2$ is the minimal graph that contains both $G_1$ and $G_2$, that is, $G_1+G_2:=(V_1\cup V_2,E_1\cup E_2, I_1\cup I_2)$. If there is no potential for ambiguity, the plus symbol will be omitted in the figures, particularly in the case where the sum has disconnected components.
\end{definition}

Fig. \ref{fig:sum_example} shows examples of the sum of some of the graphs from Fig. \ref{fig:sum_intuition}c. The first property of this operation is given by the following proposition.

\begin{figure}[h]
    \centering
    \fig[scale=1.5]{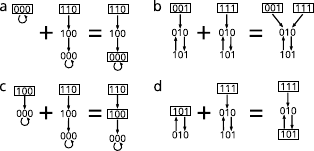}
    \caption{Examples of sums of graphs.
    }
    \label{fig:sum_example}
\end{figure} 

\begin{proposition} The sum of graphs in Definition \ref{def:sum_basic} is associative and commutative.
\end{proposition}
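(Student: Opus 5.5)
The plan is to reduce both properties to the corresponding properties of set union, applied componentwise. By Definition \ref{def:graph}, a graph is a triple $G=(V,E,I)$ of sets. By Definition \ref{def:sum_basic}, the sum is the componentwise union of these triples. Two graphs are equal exactly when their three components are equal as sets, so it suffices to verify each identity in every coordinate separately.

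For commutativity, take $G_1=(V_1,E_1,I_1)$ and $G_2=(V_2,E_2,I_2)$. Then
\[
G_1+G_2=(V_1\cup V_2,\,E_1\cup E_2,\,I_1\cup I_2)=(V_2\cup V_1,\,E_2\cup E_1,\,I_2\cup I_1)=G_2+G_1,
\]
using commutativity of $\cup$ in each coordinate.

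For associativity, take a third graph $G_3=(V_3,E_3,I_3)$ and expand both bracketings using the definition twice:
\[
(G_1+G_2)+G_3=\bigl((V_1\cup V_2)\cup V_3,\,(E_1\cup E_2)\cup E_3,\,(I_1\cup I_2)\cup I_3\bigr),
\]
and similarly for $G_1+(G_2+G_3)$. Associativity of $\cup$ then makes the two triples agree coordinatewise.

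The only point needing care, and the closest thing to an obstacle, is that the sum is again a graph in the sense of Definition \ref{def:graph}. This is what makes the iterated sums above meaningful. We need $E_1\cup E_2\subseteq (V_1\cup V_2)\times(V_1\cup V_2)$, which holds because $E_i\subseteq V_i\times V_i$. We also need $I_1\cup I_2\subseteq V_1\cup V_2$, which holds because $I_i\subseteq V_i$.

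One could alternatively argue from the characterization of the sum as the minimal graph containing both summands: both bracketings equal the minimal graph containing $G_1$, $G_2$ and $G_3$. The componentwise computation is more direct, so I would present that.
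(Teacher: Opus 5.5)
Your proposal is correct and takes the same approach as the paper, whose proof simply notes that the sum is defined through unions of sets. Your coordinatewise computation, together with the closure check ($E_1\cup E_2\subseteq (V_1\cup V_2)\times(V_1\cup V_2)$ and $I_1\cup I_2\subseteq V_1\cup V_2$), spells out what the paper leaves implicit.
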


\begin{proof}
The proof follows from the fact that addition is defined using unions of sets.
\end{proof}

With this definition, we now formally provide an alternative definition of the dynamics or the state transition graph of a (synchronously updated) Boolean network, which we will later generalize for non-autonomous networks. 
\begin{definition}\label{def:alt_def_stg}
    The state transition graph (or dynamics) of a Boolean network $F$ is $\displaystyle \mathcal{D}(F) = 
    \sum_{ \xx \in \{0,1\}^n } T_{\xx}$, where $T_{\xx}$ denotes the trajectory of $F$ that starts at $\xx$.
\end{definition}

\begin{example}
Consider again the Boolean network given by 
$F(x_1,x_2,x_3)=(x_2 \wedge \neg x_3,x_3,\neg x_1 \wedge x_2)$. Its state transition graph as a sum of trajectories is shown in Fig.~\ref{fig:sum_autonomous}b.
\end{example}

\begin{figure}[h]
    \centering
    \fig[scale=1.5]{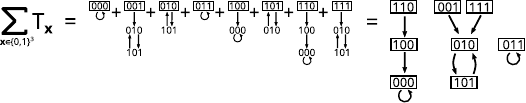}
    \caption{Example of a state transition graph as the formal sum of trajectories. Note that we are omitting the plus symbol for brevity. The resulting graph coincides with the state transition graph in  Fig.~\ref{fig:dynamics_eg}d.
    }
    \label{fig:sum_autonomous}
\end{figure} 

Although the resulting graph in Definition \ref{def:sum_basic} can be seen as the typical state transition graph, for non autonomous networks it is no longer the case as we will see in the next section.

\subsection{Dynamics of non-autonomous networks}
\label{sec:non-aut-intro}

\begin{definition}
A \emph{non-autonomous Boolean network} is defined by $$\xx(t+1)=F(\xx(t),\uu(t)),$$
where $F:\{0,1\}^{m+n}\rightarrow \{0,1\}^n$ and $\left(\uu(t)\right)_{t=0}^\infty$ is a sequence with elements in $\{0,1\}^m$, which we denote with $\uu$. We call this type of network non-autonomous because the dynamics will depend on the particular sequence $\uu$ that we consider (and hence, on time). We use $F^\uu$ to denote this non-autonomous network given by $F$ and $\uu$. If there is no ambiguity, we will denote the non-autonomous network simply by $F$.
\end{definition}

\begin{example}\label{eg:na_example1}
Consider the non-autonomous network given by $F(\xx,\uu)=(u_1\wedge x_2, x_1)$ ($F:\{0,1\}^{2+1}\rightarrow \{0,1\}^2$), and the sequence $\uu=(1,1,0,1,0,1,0,1,\ldots)$, with wiring diagram in Fig.~\ref{fig:na_dynamics_eg1}a. The trajectories of this network are the following.

$T_{00}=(\mathbf{00},00,00,00,00,00,00,\ldots)$

$T_{01}=(01,10,01,\mathbf{00},00,00,00,\ldots)$

$T_{10}=(\mathbf{10},\mathbf{01},10,01,10,01,10,\ldots)$

$T_{11}=(11,11,11,\mathbf{01},\mathbf{10},01,10,\ldots)$,

where the bold text indicates the first instance of the periodic part. We can see that regardless of the initial state, eventually the trajectory achieves periodicity (period 1 or 2).

It is important to note that the typical state transition graph representation given in Definition \ref{def:stg_classic} will simply not work. For instance, for an autonomous network, the state 01 has a unique successor, but for the non-autonomous network we considered, we can see that 01 has 00 as a successor in $T_{01}$ (for $t=2$) and 10 as a successor in $T_{10}$. Thus, the standard definition of state transition graph is meaningless for non-autonomous networks.

To illustrate our approach to solve this issue, we first show a  graphical representation of the trajectories in Fig.~\ref{fig:na_dynamics_eg1}b that will motivate a generalization of Definition \ref{def:alt_def_stg}. 
First, for each trajectory in Fig.~\ref{fig:na_dynamics_eg1}b we consider the compact representation using cycles for the periodic part Fig.~\ref{fig:na_dynamics_eg1}c. Note that the resulting graph can have repeated vertices. Formally, we are using the following definition. 

\end{example}

\begin{definition}
Consider a trajectory $T_\xx$ of the form $(\vv_1,\ldots,\vv_l,\vv_{l+1},\ldots,\vv_{l+k},\ldots)$, where $(\vv_1,\ldots,\vv_l)$ is the transient part and $(\vv_{l+1},\ldots,\vv_{l+k})$ is the periodic part. Its ``compact representation'' is the graph given by $\vv_1\rightarrow \vv_2 \rightarrow \cdots \rightarrow \vv_{l+k}$ and an edge from $\vv_{l+k}$ to $\vv_{l+1}$, where $\vv_1$ is also given an ``initial state marker'' (a rectangle in our figures). Since a trajectory has a unique compact representation and is completely determined by it, we will also refer to the  compact representation as \textbf{the} trajectory $T_\xx$. 
\end{definition}

\begin{example}\label{eg:na_dynamics_eg1}
From Example \ref{eg:na_example1}, consider the trajectory $T_{01}=(01,10,01,\mathbf{00},00,00,00,\ldots)$ (shown graphically in Fig.~\ref{fig:na_dynamics_eg1}b, second path). Then, the transient part is $(01,10,01)$ and the periodic part is $(00)$. Hence, the compact representation of this trajectory is $\fbox{$01$}\rightarrow 10\rightarrow 01\rightarrow 00$ and an edge from $00$ to itself (Fig.~\ref{fig:na_dynamics_eg1}c, second graph). In contrast to Section~\ref{sec:dynamics}, we see that the same vertex can appear twice in the graph (01 in this case). 

Now, consider the trajectory $T_{10}=(\mathbf{10},\mathbf{01},10,01,10,01,10,\ldots)$ (shown graphically in Fig.~\ref{fig:na_dynamics_eg1}b, third path). Then, the transient part is empty and the periodic part is $(10,01)$. Hence, the compact representation of this trajectory is $\fbox{$10$}\rightleftarrows 01$ (Fig.~\ref{fig:na_dynamics_eg1}c, third graph). Since the graph is a cycle, here we explicitly observe the need to specify which vertex is the initial state. Without the initial state marker we would not know if the graph represents a trajectory that starts at 10 or 01.

Consider the trajectory $T_{11}=(11,11,11,\mathbf{01},\mathbf{10},01,10,\ldots)$ (shown graphically in Fig.~\ref{fig:na_dynamics_eg1}b, fourth path). Then, the transient part is $(11,11,11)$ and the periodic part is $(01,10)$. Hence, the compact representation of this trajectory is $\fbox{$11$}\rightarrow 11\rightarrow 11\rightarrow 01 \rightleftarrows 10$ (Fig.~\ref{fig:na_dynamics_eg1}c, fourth graph). Again, we observe that vertices can appear more than once.
\end{example}

Now, we will describe a sum of graphs first with examples and then formally that will generalize Definition \ref{def:sum_basic} and will be used to generalize the definition of state transition graph to non autonomous Boolean networks.

Consider trajectories $T_{00}$ and $T_{01}$ in Fig.~\ref{fig:na_dynamics_eg1}c. We can see that the first trajectory is ``contained'' in the second trajectory in the sense that the path obtained by starting at $00$ in $T_{00}$ is the same as the path obtained by starting at $00$ in $T_{01}$. However, we must remember that since the network is not autonomous, such property is not always true for other vertices. Indeed, the path obtained by starting at the second instance of $01$ in $T_{01}$ is not the path obtained by starting at  the first instance of $01$ in $T_{01}$. Also, for example, the path obtained by starting at $10$ in $T_{01}$ is not equal to the path obtained by starting at $10$ in $T_{10}$. Hence, we must somehow indicate that in $T_{01}$ it valid to start at $00$. We can achieve this easily by simply marking 00 as an initial state, Fig.~\ref{fig:na_dynamics_eg1}d (left). Similarly, consider the trajectories $T_{10}$ and $T_{11}$  in Fig.~\ref{fig:na_dynamics_eg1}c. Since $T_{10}$ is ``contained'' in $T_{11}$, it is enough to mark $10$ in $T_{11}$ with the initial state marker, Fig.~\ref{fig:na_dynamics_eg1}d (right). This example again shows the importance of the initial state marker in determining which vertices are valid initial states. Since there is no overlap between the two graphs in Fig.~\ref{fig:na_dynamics_eg1}d, the process is done.

\begin{figure}[h]
    \centering
   \fig[scale=1.5]{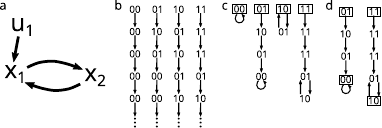}
    \caption{Wiring diagram and state transition graph of the non-autonomous Boolean network in Example~\ref{eg:na_example1}. 
    (a) The wiring diagram encodes the dependency between variables. Note that the fact that the first entry of $F$ depends on $u_1$ is encoded graphically as an arrow from $u_1$ to $x_1$.
    (b) The four trajectories of the non-autonomous BN.
    (c) Periodic points in each trajectory can be represented more compactly.
    (d) We can encode (c) more compactly by ``merging'' common subgraphs between trajectories. The merging has to be done in a way that the non-autonomous properties are maintained. 
    }
    \label{fig:na_dynamics_eg1}
\end{figure} 

At its core, the sum operation will be  conceptualized as a structural overlay of state transitions. Directly related to the intuition of overlay is the concept of containment, which in our context specifically means containment that respects trajectories ---a notion that can differ from typical containment in graph theory. For instance, the graph
\[G = \fbox{$11$}\rightarrow 
\begin{tikzpicture}[baseline=(X.base)]
    \node[inner sep=1pt] (X) {11};
    \path[->, every loop/.style={looseness=2}] (X) edge [loop above] (X);
\end{tikzpicture} 
\rightarrow 01 \rightleftarrows 10\]
might, under definitions based on label mapping or graph homomorphisms, be considered to contain the graph 
\[T_{11}=\fbox{$11$}\rightarrow 11\rightarrow 11\rightarrow 01 \rightleftarrows 10\]
because every vertex and edge of $T_{11}$ exists in $G$. However, graph $G$ fails in encoding the fact that only 3 consecutive $11$ states can occur, and in that sense it does not contain all trajectory information of $T_{11}$. To keep the underlying principles focused and avoid the unnecessary overhead of multisets or the explicit ``unrolling'' of the graph over time, we define containment structurally: We say that graph $G_1$ is contained in $G_2$ if $G_1$ can be obtained from $G_2$ by removing edges, vertices, and initial state markers. Consequently, $T_{11}$ is not contained in $G$, whereas  it is contained in 
\[\fbox{$11$}\rightarrow 11\rightarrow 11\rightarrow 01 \rightleftarrows \fbox{10}.\]

With this clarification of containment, the formal definition of the sum (Definition~\ref{def:sum_basic}) and its properties apply verbatim to non-autonomous dynamics, Fig.~\ref{fig:na_sum_example}. Furthermore, to formally define the state transition graph of non-autonomous networks, we can directly use Definition~\ref{def:alt_def_stg}, and hence \emph{the state transition graph of a network is the sum of its trajectories in a precise mathematical sense.}

\begin{figure}[h]
    \centering
    \fig[scale=1.5]{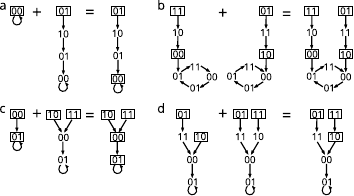}
    \caption{Examples of sums of graphs.
    }
    \label{fig:na_sum_example}
\end{figure} 

\begin{example}\label{eg:na_example2}
Consider the non-autonomous network given by $F(\xx,\uu)=(u_1 \wedge x_2, \neg x_1)$, and the sequence $\uu=(1,1,0,1,0,1,0,1,\ldots)$, with wiring diagram in Fig.~\ref{fig:na_dynamics_eg2}a. The trajectories of this network are
\begin{align*}T_{00}&=(00,\mathbf{01,11,00,01},01,11,00,01,01,11,00,01,01,11,00,...),\\
T_{01}&=(01,11,10,\mathbf{00,01,01,11},00,01,01,11,00,01,01,11,00,...),\\
T_{10}&=(10,\mathbf{00,01,01,11},00,01,01,11,00,01,01,11,00,01,01,...),\\
T_{11}&=(11,10,00,\mathbf{01,11,00,01},01,11,00,01,01,11,00,01,01,...).
\end{align*}

The trajectories are also shown as graphs in Fig.~\ref{fig:na_dynamics_eg2}b. The sum to obtain the state transition graph is shown in Fig.~\ref{fig:na_dynamics_eg2}cd. Note that some states appear multiple times because the network is non-autonomous.

\end{example}

\begin{figure}[h]
    \centering
   \fig[scale=1.5]{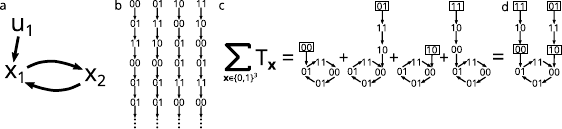}
    \caption{Wiring diagram and state transition graph of the non-autonomous Boolean network in Example~\ref{eg:na_example2}. 
    (a) The wiring diagram encodes the dependency between variables. Note that the fact that the first entry of $F$ depends on $u_1$ is encoded graphically as an arrow from $u_1$ to $x_1$.
    (b) The four trajectories of the non-autonomous BN.
    (c) Periodic points in each trajectory can be represented more compactly and are the terms in the sum that defines the state transition graph.
    (d) The resulting sum is the state transition graph. 
    }
    \label{fig:na_dynamics_eg2}
\end{figure}

The dependence of the dynamics on the input sequence becomes clear when considering the same non-autonomous network but a different input sequence.
  
\begin{example}\label{eg:na_example3}
Let $F(\xx,\uu)=(u_1\wedge x_2, \neg x_1)$ with $\mathbf u=(0,0,0,0,0,\ldots)$. Then, the wiring diagram, shown in Fig.~\ref{fig:na_dynamics_eg3}a, is exactly as in Fig.~\ref{fig:na_dynamics_eg2}a. However, the trajectories
\begin{align*}
T_{00}=(00,\mathbf{01},01,01,01,...),\\
T_{01}=(\mathbf{01},01,01,01,01,...),\\
T_{10}=(10,00,\mathbf{01},01,01,...),\\
T_{11}=(11,00,\mathbf{01},01,01,...),
\end{align*}
which are shown as graphs in Fig.~\ref{fig:na_dynamics_eg3}b, and the sum to obtain the state transition diagram (Fig.~\ref{fig:na_dynamics_eg3}c,d)  differ substantially.

\end{example}

\begin{figure}[h]
    \centering
   \fig[scale=1.5]{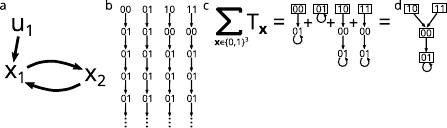}
    \caption{Wiring diagram and state transition graph of the non-autonomous Boolean network in Example~\ref{eg:na_example3}. 
    (a) The wiring diagram encodes the dependency between variables. Note that the fact that the first entry of $F$ depends on $u_1$ is encoded graphically as an arrow from $u_1$ to $x_1$.
    (b) The four trajectories of the non-autonomous BN.
    (c) Periodic points in each trajectory can be represented more compactly and are the terms in the sum that defines the state transition graph.
    (d) The resulting sum is the state transition graph. 
    }
    \label{fig:na_dynamics_eg3}
\end{figure}

\section{Product of dynamics}\label{sec:prod}

In this section, we define the product of (non-autonomous) dynamics. 
We begin by introducing a running example.

\begin{example}\label{ex:run}
Consider the Boolean network $F(x_1,x_2,x_3,x_4) = 
(\neg x_1 \vee \neg x_2, \neg x_1 \vee x_2, x_4\wedge x_1 , x_3)$, with  wiring diagram shown in Fig.~\ref{fig:running_eg}a.
We can see this Boolean network as the combination of two modules, $F_1$ and $F_2$, Fig.~\ref{fig:running_eg}bc. The top module $F_1(x_1,x_2)=(\neg x_1 \vee \neg x_2, \neg x_1 \vee x_2)$ has no external input (Fig.~\ref{fig:running_eg}b). Thus, its dynamics results in the typical state transition graph with $2^2=4$ elements and each state is a valid initial state, Fig.~\ref{fig:running_eg}d. The node $x_1$ of $F_1$ acts as the external input $u_1$ of the bottom module, which is thus a non-autonomous network $F_2(x_3,x_4,u_1)=(x_4\wedge u_1, x_3)$, with wiring diagram in Fig.~\ref{fig:running_eg}c.
To enumerate all trajectories of $F_2$, one must consider all possible  external input sequences of $F_2$, which are determined by $x_1$. Depending on the choice of initial state of $F_1$, the trajectories of $x_1$ can be 0101..., 1111..., and 1010.... The dynamics of $F_2$, given these three input sequences $u_1(t)$, are shown in Fig.~\ref{fig:running_eg}e.
\end{example}

\begin{figure}[h]
\centerline{ \hbox{ 
 \fig[scale=1.5]{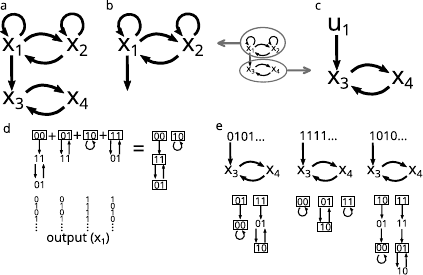}
}}
 \caption{ Dynamics of modules. 
 (a) Wiring diagram of full network. 
 (b) Top module.
 (c) Dynamics of the top module with the resulting output shown at the bottom.
 (d) Bottom module.
 (e) Dynamics of the bottom module corresponding to the different possible inputs it can receive from upstream. Note that since the output sequence 0101... appears twice in (c), there are exactly three possible input sequences for the bottom module.
}
 \label{fig:running_eg}
\end{figure}

\subsection{Product of graphs}\label{sec:prod_graphs}

The tools we introduced in Section 2 permit the computation of the dynamics of different modules separately. However, of key interest from a systems point of view is the problem of reconstructing the dynamics of the full network from the dynamics of the modules. To be able to do this, we need to define a product on graphs.

\begin{definition}\label{def:eq_reach}
Consider the graphs $G_1=(V_1,E_1,I_1)$ and $G_2=(V_2,E_2,I_2)$.  We say that $a\in V_1$ and $b\in V_2$ are \emph{equally reachable} if there are initial states $x\in I_1$ and $y\in I_2$ such that there are paths of the same length from $x$ to $a$ and from $y$ to $b$. This definition will allow us to quantify the idea of timing being consistent for states in state transition graphs of different modules to define the product of dynamics properly. Note that we allow a path of length zero, so for any $a\in I_1$ and $b\in I_2$, $a$ and $b$ are equally reachable.
\end{definition}

To illustrate this definition, consider the pairs of graphs shown in Fig.~\ref{fig:running_eg_eq_reach}. 
The pairs of states that are equally reachable are shown below each of the graphs (we omit commas and parentheses for simplicity). For example, in Fig.~\ref{fig:running_eg_eq_reach}, every combination of initial states from $G_1$ and $G_2$ will be equally reachable. Also, in Fig.~\ref{fig:running_eg_eq_reach}a, $11$ in $G_1$ and $01$ in $G_2$ are equally reachable because they can be reached from $00\in I_1$ and $11\in I_2$ (after one time step), respectively. 
In Fig.~\ref{fig:running_eg_eq_reach}b, $01$ in $G_1$ and $01$ in $G_2$ are equally reachable because they can be reached from $11\in I_1$ and $10\in I_2$ (after one time step), respectively; also, $01$ and $00$ are equally reachable because they can be reached from $11\in I_1$ and $00\in I_2$ (after one time step), respectively. 

\begin{figure}[h]
\centerline{ \hbox{ 
 \fig[scale=1.5]{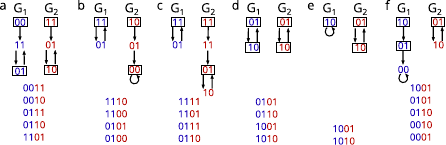}
}}
 \caption{ Illustration of equal reachability. The states at the bottom (color coded) are the concatenation of states of $G_1$ (in blue) and states of $G_2$ (in red) that are equally reachable.
}
 \label{fig:running_eg_eq_reach}
\end{figure}

With the concept of equal reachability, we can now define the product of graphs.

\begin{definition}\label{def:prod}
Consider the graphs $G_1=(V_1,E_1,I_1)$ and $G_2=(V_2,E_2,I_2)$. We define the product of $G_1$ and $G_2$ as the graph $G=(V,E,I)$ such that 

\[
V = \{ (a,b)\in V_1\times V_2 :  \text{ $a$ and $b$ are equally reachable } \},
\]
there is an edge in $E$ from $(a,b)\in V$ to $(c,d)\in V$ if $(a,c)\in E_1$ and $(b,d)\in E_2$, and $I=I_1\times I_2$ (cross product).

\end{definition}

\begin{example}\label{ex:prod} Consider $G_1$ and $G_2$ from Fig.~\ref{fig:running_eg_eq_reach}a. Using Definitions~\ref{def:eq_reach} and \ref{def:prod}, we obtain

\[
V=\{ 0011,0010,0111,0110,1101 \}, \ I=\{ 0011,0010,0111,0110\}, 
\text{ and} 
\]

\[
E= \{ (0011,1101) , (0010,1101), (0111, 1101), (0110,1101),(1101,0110)
 \},
\]

where for ease of visualization we are denoting $(x,y)$ simply by $xy$ (concatenation) in $V$, $I$, and $E$. The resulting product is shown in Fig.~\ref{fig:running_eg_prod}a.

Now consider $G_1$ and $G_2$ from Fig.~\ref{fig:running_eg_eq_reach}c. Using Definitions~\ref{def:eq_reach} and \ref{def:prod} we obtain

\[
V=\{ 1111,1101,0111,0110 \}, \ I=\{ 1111,1101\}, 
\text{ and} 
\]

\[
E= \{ (1111,0111), (1101,0110), (0111,1101),(0110,1101)
 \}.
\]
The resulting product is shown in Fig.~\ref{fig:running_eg_prod}c.
\end{example}

\begin{figure}[h]
\centerline{ \hbox{ 
 \fig[scale=1.5]{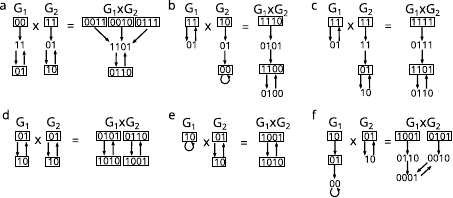}
}}
 \caption{ Examples of the product of graphs. These examples correspond to the graphs in Fig.~\ref{fig:running_eg_eq_reach}.
}
 \label{fig:running_eg_prod}
\end{figure}

With the definitions of addition and product, we have the following properties.
\begin{proposition}
Consider the graphs $G_1, G_2, G_3$, then (up to isomorphisms) 
\[
(G_1+G_2)\times G_3 = G_1\times G_3 + G_2\times G_3 \ , \ \ \ 
G_1\times (G_2 + G_3) = G_1\times G_2 + G_1\times G_3,
\]
\[
 \text{ and } \ \ \ 
(G_1\times G_2)\times G_3 = G_1\times (G_2\times G_3).
\]

In particular, these operations give the space of possible dynamics the structure of a semiring with the empty set as the zero element. If required by definition, a multiplicative identity can be adjoined in the standard way.

\end{proposition}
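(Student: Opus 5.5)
The plan is to prove each identity by comparing the three components $(V,E,I)$ of both sides directly. The isomorphism is the obvious relabelling: $((a,b),c)\mapsto (a,(b,c))$ for associativity, and the identity map on pairs for distributivity. Initial states and edges are straightforward in every case, so the work is concentrated in the vertex sets, that is, in the equal-reachability condition of Definition~\ref{def:eq_reach}.

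\textbf{Associativity.} The key lemma is a description of paths in a product. A path of length $k$ in $G_1\times G_2$ from an initial vertex $(x,y)\in I_1\times I_2$ to $(a,b)$ is exactly a pair of paths of length $k$, one in $G_1$ from $x$ to $a$ and one in $G_2$ from $y$ to $b$. The forward direction is immediate from the definition of $E$. For the converse, I would check that every intermediate pair $(a_i,b_i)$ reached at step $i$ is itself equally reachable (witnessed by the truncated paths), so it lies in $V$ and the edges between consecutive pairs are present.

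Using this lemma, $((a,b),c)$ is a vertex of $(G_1\times G_2)\times G_3$ if and only if there are initial states $x,y,z$ and a single $k$ with length-$k$ paths $x\to a$, $y\to b$, $z\to c$. This condition is symmetric in the three factors, so the same holds for $(a,(b,c))$ in $G_1\times (G_2\times G_3)$. Edges on both sides are triples of coordinatewise edges. Initial sets are $(I_1\times I_2)\times I_3$ versus $I_1\times(I_2\times I_3)$, which agree under the relabelling.

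\textbf{Distributivity.} By symmetry of Definition~\ref{def:prod} it suffices to prove $(G_1+G_2)\times G_3=G_1\times G_3+G_2\times G_3$.
\begin{itemize}
\item Initial states: $(I_1\cup I_2)\times I_3=(I_1\times I_3)\cup(I_2\times I_3)$.
\item Vertices, the inclusion ``$\supseteq$'': a witnessing pair of equal-length paths in $G_i$ and $G_3$ is also such a pair in $G_1+G_2$ and $G_3$.
\item Edges: an edge $((a,c),(b,d))$ of the right-hand side clearly lies in the left-hand side. Conversely, an edge of the left-hand side comes from $(a,b)\in E_1\cup E_2$, hence from $E_1$ or from $E_2$. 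Once the endpoint vertices are known to lie in the corresponding $G_i\times G_3$, the edge lies there too.
\end{itemize}

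The hard part is therefore the vertex inclusion ``$\subseteq$''. A path in $G_1+G_2$ from an initial state could start in $G_1$ and then continue along edges of $G_2$ through a shared vertex. The resulting pair would then be equally reachable in $(G_1+G_2)\times G_3$ without being equally reachable in either $G_1\times G_3$ or $G_2\times G_3$. I would first try to rule this out using the trajectory-respecting notion of containment fixed in Section~\ref{sec:non-aut-intro}. Under that notion, summands are merged only along common sub-trajectories, so a shared vertex carries the same forward path in both summands. A path entering $G_2$ from $G_1$ can then be rerouted to stay inside $G_1$, and the truncated witnesses remain of equal length.

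If this cannot be justified for arbitrary graphs, I would state the precise hypothesis under which it holds and prove distributivity under that hypothesis. The natural choice is that the graphs are sums of trajectories in which every vertex has out-degree one, or more generally graphs that are closed under taking forward paths from shared vertices. This class contains all state transition graphs in the sense of Definition~\ref{def:alt_def_stg}.

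\textbf{Semiring structure.} Addition is associative and commutative by the earlier proposition on sums. The empty graph is the additive identity, and it is absorbing for $\times$ because $V_1\times\emptyset=\emptyset$. A formal multiplicative unit can be adjoined in the usual way.
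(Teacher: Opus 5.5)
The paper states this proposition without proof, so your proposal has to stand on its own. Your associativity argument is complete and works for arbitrary graphs. The path lemma, with the truncated-path check that intermediate pairs lie in $V$, reduces membership of $((a,b),c)$ and of $(a,(b,c))$ to the same symmetric condition, and edges and initial states then match. The zero element is also handled correctly.

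\textbf{Distributivity, vertices.} The obstruction you describe actually occurs. With the union sum of Definition~\ref{def:sum_basic}, the two distributive identities are false for arbitrary graphs, so your fallback hypothesis is required, not optional.
\begin{itemize}
\item Take $G_1=\fbox{$p$}\to q$, $G_2=\fbox{$q$}\to r$, $G_3=\fbox{$z_0$}\to z_1\to z_2$. The path $p\to q\to r$ puts $(r,z_2)$ in $(G_1+G_2)\times G_3$. But $r\notin V_1$, and in $G_2$ the vertex $r$ is reachable only at length $1$.
\item Genuine compact trajectories fail too. Take $\fbox{$p$}\to q$ with a loop at $q$, then $\fbox{$q$}\to r\rightleftarrows s$, and $\fbox{$z_0$}\to w\to z\rightleftarrows z'$. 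These put $(r,z)$ on the left side only, since in $G_2$ the vertex $r$ is reached only at odd lengths and in $G_3$ the vertex $z$ only at even ones.
\end{itemize}
In the second example, $G_1$ and $G_2$ each have out-degree one. They can also be trajectories of a single non-autonomous network under a single input sequence (let $q$ move to $r$ at $t=0$ and to itself afterwards), as the summands in Theorem~\ref{thm:main} are. This has three consequences.
\begin{itemize}
\item The out-degree-one condition must be imposed on the sum, not on the summands.
\item Graphs satisfying it are not closed under the union sum, so they do not form a semiring.
\item Non-autonomous state transition graphs formed by label union violate it. In Example~\ref{eg:na_example1} the state $01$ has successors $00$ and $10$. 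This contradicts your claim that the class contains all state transition graphs.
\end{itemize}

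\textbf{A repair.} Change the sum rather than restricting the graphs. Work with state-labelled graphs of out-degree one in which distinct vertices have distinct forward label sequences; compact trajectories are of this kind. Let $G_1+G_2$ identify two vertices exactly when their forward label sequences agree. This agrees with Definition~\ref{def:sum_basic} for trajectories of one autonomous network and with the merging in Fig.~\ref{fig:na_dynamics_eg1}d. Both operations preserve the class: a product vertex $(a,b)$ has the single successor $(a',b')$, which is again equally reachable, and distinct pairs have distinct forward sequences. No rerouting is then needed. A path in $G_1+G_2$ from an initial state of $G_1$ never leaves $G_1$, because $G_1$ contains the unique out-edge of each of its vertices.

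\textbf{Distributivity, edges.} Your edge step has a separate gap. Knowing that both endpoints of a left-hand edge lie in $V(G_1\times G_3)\cup V(G_2\times G_3)$ does not place them in the summand whose edge set supplies $(a,b)$. Without hypotheses this fails even when the vertex sets agree. Take $G_1=(\{a,b\},\{(a,b)\},\emptyset)$, $G_2=(\{a,b\},\emptyset,\{a,b\})$ and $G_3=(\{z_0,z_1\},\{(z_0,z_1)\},\{z_0,z_1\})$. Both sides have vertex set $\{a,b\}\times\{z_0,z_1\}$, but only the left side has the edge from $(a,z_0)$ to $(b,z_1)$.

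In the out-degree-one setting, choose the index from the source vertex instead. Consider a left-hand edge from $(a,c)$ to $(b,d)$, and suppose $(a,c)\in V(G_j\times G_3)$ is witnessed by paths of length $k$ from $x\in I_j$ and $z\in I_3$. Then the out-edge $(a,b)$ of $a$ lies in $E_j$. Extending both paths by one step shows $(b,d)\in V(G_j\times G_3)$, so the edge lies in $G_j\times G_3$.
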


\subsection{Main result}\label{sec:main result}

To be able to relate the dynamics of a network with the dynamics of its modules, we will need the following lemma, which we state and prove in a more general setting, where two Boolean subnetworks can both influence each other. Subsequently, we will assume that the influence is at most one-directional, i.e., that the network contains two or more strongly connected components.

\begin{lemma}\label{lemma:prod}
Consider a Boolean network $F:\{0,1\}^{m+n}\rightarrow \{0,1\}^{m+n}$ 
such that it has the form $F(\xx_1,\xx_2)=(F_1(\xx_1,\xx_2),F_2(\xx_1,\xx_2))$ where $F_1:\{0,1\}^{n+m}
\rightarrow \{0,1\}^n$ and $F_2:\{0,1\}^{n+m}\rightarrow \{0,1\}^m$. Consider $\xx=(\xx_1,\xx_2)$ where $\xx_1\in \{0,1\}^n$ and $\xx_2\in \{0,1\}^m$ and denote with $T_{\xx,F}$ the trajectory of $F$ that starts at $\xx$, denote with $T_{\xx_1,F_1}$ the trajectory of $F_1$ that starts at $\xx_1$ and has external input $\xx_2$, and denote with $T_{\xx_2,F_2}$ the trajectory of $F_2$ that starts at $\xx_2$ and has external input $\xx_1$.
Then, $T_{\xx,F} =T_{(\xx_1,\xx_2),F}= 
T_{\xx_1,F_1}  \times T_{\xx_2,F_2} $. 
\end{lemma}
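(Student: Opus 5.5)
The plan is to make the compact representation concrete as an indexed, eventually periodic sequence of positions, and then compare the two sides position by position. Write $\xx(t)=(\xx_1(t),\xx_2(t))$ for the trajectory of $F$ from $\xx$. The first observation is that $\xx_1(t+1)=F_1(\xx_1(t),\xx_2(t))$. So the trajectory $T_{\xx_1,F_1}$ with external input sequence $(\xx_2(t))_t$ is exactly the sequence $(\xx_1(t))_t$. Likewise $T_{\xx_2,F_2}$ with input $(\xx_1(t))_t$ is $(\xx_2(t))_t$. Thus all three objects in the lemma are built from the same time-indexed data, and the content of the lemma is that the product of Definition~\ref{def:prod} reassembles the compact representations correctly.

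\textbf{Step 1 (describe each factor).} Since states may repeat in a compact representation, I will treat its vertices as positions carrying state labels; this is also what ``up to isomorphism'' should mean here. Let $T_{\xx_1,F_1}$ have transient length $l_1$ and period $k_1$. Its vertices are positions $0,\dots,l_1+k_1-1$, and I define $p_1(t)=t$ for $t<l_1+k_1$ and $p_1(t)=l_1+((t-l_1)\bmod k_1)$ otherwise. Then:
\begin{itemize}
\item every vertex has out-degree exactly one, with edge $p_1(t)\to p_1(t+1)$;
\item the unique initial vertex is $p_1(0)$;
\item $p_1(t)=p_1(s)$ iff $t=s$, or $t,s\ge l_1$ and $t\equiv s \pmod{k_1}$.
\end{itemize}
Define $l_2,k_2,p_2$ in the same way for $T_{\xx_2,F_2}$.

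\textbf{Step 2 (compute the product).} Because each factor has a single initial vertex and out-degree one, the unique path of length $t$ from the initial vertex ends at $p_i(t)$. Hence, by Definition~\ref{def:eq_reach}, $(a,b)$ is equally reachable iff $(a,b)=(p_1(t),p_2(t))$ for some $t\ge 0$. The vertex set of the product is therefore $\{(p_1(t),p_2(t)):t\ge0\}$, and the initial set is $I=\{(p_1(0),p_2(0))\}$. For edges, an edge out of $(p_1(t),p_2(t))$ must go to $(c,d)$ with $(p_1(t),c)\in E_1$ and $(p_2(t),d)\in E_2$. Out-degree one forces $c=p_1(t+1)$ and $d=p_2(t+1)$, and this pair lies in $V$. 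So the product is the graph of the map $t\mapsto(p_1(t),p_2(t))$, with each edge joining the images of $t$ and $t+1$.

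\textbf{Step 3 (compare with $T_{\xx,F}$).} By Step 1, $(p_1(t),p_2(t))=(p_1(s),p_2(s))$ iff $t=s$, or $t,s\ge L:=\max(l_1,l_2)$ and $t\equiv s \pmod{K}$ with $K:=\mathrm{lcm}(k_1,k_2)$. Separately, the pair sequence $(\xx_1(t),\xx_2(t))$ is periodic from time $l$ with period $k$ iff both component sequences are. Therefore the minimal preperiod of $\xx(t)$ is $L$ and its minimal period is $K$. This means the compact representation of $T_{\xx,F}$ has positions identified by exactly the same rule. The map sending position $t$ of $T_{\xx,F}$ to $(p_1(t),p_2(t))$ is then a bijection that preserves edges, labels (label $(\xx_1(t),\xx_2(t))=\xx(t)$) and the initial marker, which gives the equality.

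The main obstacle is Step 3, and it is largely definitional rather than computational. One must be careful that the paper's graphs allow repeated state labels, so equality must be read as isomorphism of labeled graphs. One must also justify that minimal preperiod and period combine as $\max$ and $\mathrm{lcm}$. Minimality of the period needs the argument that a common period of the pair is a period of each component, so it is a multiple of both $k_1$ and $k_2$. Minimality of the preperiod needs the argument that periodicity of the pair from time $l$ forces each component to be periodic from $l$.
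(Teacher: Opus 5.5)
Your proof is correct. It rests on the same observation as the paper's proof: the components of the $F$-trajectory are exactly the two factor trajectories, so the product walks both factors in lockstep. The argument is organized differently, though.

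The paper argues step by step in time. The product has the single initial vertex $\xx=(\xx_1,\xx_2)$. It contains the edge $\xx\to F(\xx)$ because $\xx_1\to F_1(\xx_1,\xx_2)$ and $\xx_2\to F_2(\xx_1,\xx_2)$ are edges of the factors. Then ``similarly'' the two graphs coincide along every path.

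You instead compute the product in closed form. A unique initial vertex plus out-degree one turns equal reachability into synchronization, so $V=\{(p_1(t),p_2(t)):t\ge 0\}$ with edges $(p_1(t),p_2(t))\to(p_1(t+1),p_2(t+1))$. You then match this to $T_{\xx,F}$ by showing that both identify times $t\ne s$ exactly when $t,s\ge\max(l_1,l_2)$ and $t\equiv s\pmod{\mathrm{lcm}(k_1,k_2)}$.

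The paper's version is shorter and conveys the intuition, but it never invokes equal reachability. It therefore leaves implicit that the product contains nothing beyond the synchronized walk, and its ``similarly'' hides where the product's cycle closes. Your max/lcm comparison settles exactly that point.

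Treating vertices as labelled positions is also what makes the product meaningful when a non-autonomous factor trajectory repeats a state, as $T_{11}$ in Example~\ref{eg:na_example1} does. Moreover, since $F$ itself is autonomous, the labels $\xx(t)$ on the positions of $T_{\xx,F}$ are pairwise distinct. So your labelled isomorphism is in fact a literal equality of graphs on states, which is what the lemma asserts.

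You use one reading silently that should be stated: ``path'' in Definition~\ref{def:eq_reach} must mean walk. With simple paths, two factors that are pure cycles of lengths $2$ and $3$ would have a product with only two vertices, rather than the $6$-cycle $T_{\xx,F}$.
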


\begin{proof} First, since the graphs $T_{\xx_1,F_1}$ and $T_{\xx_2,F_2}$ have a unique initial state marker ($\xx_1$ and $\xx_2$, respectively), then the product will have a unique initial state marker. Namely, the only initial state in $T_{\xx_1,F_1} \times T_{\xx_2,F_2}$ is $(\xx_1,\xx_2)=\xx$. 

Since there is an edge from $\xx_1$ to $F_1(\xx_1,\xx_2)$ in $T_{\xx_1,F_1}$ and from $\xx_2$ to $F_2(\xx_1,\xx_2)$ in $T_{\xx_2,F_2}$, there will be an edge from $\xx=(\xx_1,\xx_2)$ to $(F_1(\xx_1,\xx_2),F_2(\xx_1,\xx_2))=F(\xx)$ in $T_{\xx_1,F_1} \times T_{\xx_2,F_2} $. That is, both $T_{\xx_1,F_1} \times T_{\xx_2,F_2} $ and $T_{\xx,F}$ coincide in the initial state marker and the path after one time step. Similarly, it follows that $T_{\xx_1,F_1} \times T_{\xx_2,F_2} $ and $T_{\xx,F}$ coincide at every path; hence they are equal and that finishes the proof.
\end{proof}

Now, suppose $F$ is of the form $ F(\xx_1,\xx_2)=(F_1(\xx_1),F_2(\xx_1,\xx_2))$. Then, $F_1$ can be considered an autonomous BN and $F_2$ can be considered a non-autonmous BN with external inputs given by $T_{\xx_1,F_1}$. In isolation, $F_2$ can have a variety of external inputs and hence a variety of possible dynamics. We denote the family of all possible dynamics of $F_2$ by $\mathcal{D}(F_2)$ and define the semidirect product of $\mathcal{D}(F_1)$ and $\mathcal{D}(F_2)$ as the sum of compatible products of dynamics of $F_1$ and $F_2$, that is, the sum of the product of trajectories of $F_1$, $T_{\xx_1,F_1}$, times a compatible dynamics of $F_2$, $\mathcal{D}(F_2^{T_{\xx_1,F_1}})$. More precisely:

\begin{definition}
Suppose a Boolean network $F$ is of the form $ F(\xx_1,\xx_2)=(F_1(\xx_1),F_2(\xx_1,\xx_2))$. Then, we define the semidirect product 
\[
\mathcal{D}(F_1) \rtimes \mathcal{D}(F_2) := 
\sum_{\xx_1} T_{\xx_1,F_1} \times \mathcal{D}(F_2^{T_{\xx_1,F_1}}),
\]
where, in the last sum, $\xx_1$ varies over all elements of the domain of $F_1$.
\end{definition}

Note that since the first factor constrains which dynamics of $F_2$ are considered we cannot call this a product, and in analogy to group theory we call it a semidirect product. Another interpretation is that this product is a type of dot product of the dynamics of $F_1$ and the possible dynamics of  $F_2$. 

Now we prove the main theorem in this paper.
\begin{theorem}\label{thm:main}
Consider $F(\xx_1,\xx_2)=(F_1(\xx_1),F_2(\xx_1,\xx_2))$, then

\[
\mathcal{D}(F)  = \mathcal{D}(F_1) \rtimes \mathcal{D}(F_2).
\]
\end{theorem}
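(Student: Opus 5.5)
We will prove the identity by expanding both sides as sums of trajectories and matching the terms one by one. By Definition~\ref{def:alt_def_stg}, $\mathcal{D}(F)=\sum_{(\xx_1,\xx_2)} T_{(\xx_1,\xx_2),F}$, where $\xx_1$ ranges over the domain of $F_1$ and $\xx_2$ over the domain of $F_2$. Since the sum is associative and commutative, we can regroup this as
\[
\mathcal{D}(F)=\sum_{\xx_1}\sum_{\xx_2} T_{(\xx_1,\xx_2),F}.
\]

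The first step applies Lemma~\ref{lemma:prod} to each term. Here $F_1$ does not depend on $\xx_2$, so the trajectory of $F_1$ starting at $\xx_1$ with "external input" $\xx_2$ is just the autonomous trajectory $T_{\xx_1,F_1}$. The input seen by $F_2$ is the sequence of first coordinates along $T_{(\xx_1,\xx_2),F}$, and by the triangular form this sequence is exactly $T_{\xx_1,F_1}$. Hence the $F_2$-trajectory in the lemma is the trajectory of the non-autonomous network $F_2^{T_{\xx_1,F_1}}$ starting at $\xx_2$, and the lemma gives
\[
T_{(\xx_1,\xx_2),F}=T_{\xx_1,F_1}\times T_{\xx_2,F_2^{T_{\xx_1,F_1}}}.
\]

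The second step factors the inner sum. By the left distributive law in the semiring proposition (extended by induction to finite sums),
\[
\sum_{\xx_2} T_{\xx_1,F_1}\times T_{\xx_2,F_2^{T_{\xx_1,F_1}}}=T_{\xx_1,F_1}\times\sum_{\xx_2}T_{\xx_2,F_2^{T_{\xx_1,F_1}}}=T_{\xx_1,F_1}\times \mathcal{D}\bigl(F_2^{T_{\xx_1,F_1}}\bigr),
\]
where the last equality is again Definition~\ref{def:alt_def_stg}, now for the non-autonomous network. Summing over $\xx_1$ gives exactly the definition of $\mathcal{D}(F_1)\rtimes\mathcal{D}(F_2)$.

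The main obstacle is rigor at the level of graph identity rather than isomorphism. Trajectories of non-autonomous networks may repeat states, so a trajectory is really a compact representation with distinguished vertex instances. The sum is defined by unions, so we must check that identical subgraphs coming from different terms are merged consistently on both sides. I would handle this by working throughout with the trajectory-respecting containment convention, noting that each side is a union of the same family of graphs. I would also justify the "coincide at every path" step of Lemma~\ref{lemma:prod} in the present setting by induction on time. A further point to check is that the periodic part of the product starts where both the $F_1$- and $F_2$-components have become jointly periodic; the equal-reachability condition should be exactly what guarantees that the product of the two compact representations is the compact representation of $T_{(\xx_1,\xx_2),F}$.
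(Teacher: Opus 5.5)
Your proposal is correct and follows the paper's proof essentially line for line. Both expand $\mathcal{D}(F)$ over pairs $(\xx_1,\xx_2)$, apply Lemma~\ref{lemma:prod} to each term, pull $T_{\xx_1,F_1}$ out of the inner sum over $\xx_2$ (the paper's step ``$\xx_2$ does not affect $F_1$'', which is the distributive law you cite), and then recognize $\mathcal{D}(F_2^{T_{\xx_1,F_1}})$ and the definition of the semidirect product. Your writing of the second factor as $T_{\xx_2,F_2^{T_{\xx_1,F_1}}}$ makes explicit what the paper's shorthand $T_{\xx_2,F_2}$ leaves implicit, and the paper does not address the vertex-merging issues you raise in your last paragraph either.
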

\begin{proof}
\begin{align*}
    \mathcal{D}(F) & = \sum_{\xx} T_{\xx,F}  & \text{ (definition of $\mathcal{D}(F)$) }\\
    & = \sum_{\xx_1,\xx_2} T_{(\xx_1,\xx_2),F} &  \text{ (enumerating the sum by $\xx_1$ and $\xx_2$ independently)} \\
         & = \sum_{\xx_1,\xx_2} T_{\xx_1,F_1}  \times T_{\xx_2,F_2}  & \text{ (Lemma \ref{lemma:prod}) }     \\
         & = \sum_{\xx_1} \sum_{\xx_2} T_{\xx_1,F_1}  \times T_{\xx_2,F_2}  & \text{(reordering of the sum)}\\
         & = \sum_{\xx_1} T_{\xx_1,F_1} \times \sum_{\xx_2}  T_{\xx_2,F_2}  & \text{ ( $\xx_2$ does not affect $F_1$)}\\
         & = \sum_{\xx_1}  T_{\xx_1,F_1} \times \mathcal{D}(F_2^{T_{\xx_1,F_1}}) & \text{  (definition of $\mathcal{D}(F_2^{T_{\xx_1,F_1}})$)} \\
         & = \mathcal{D}(F_1) \rtimes \mathcal{D}(F_2)
\end{align*} 
\end{proof}

Using Theorem.~\ref{thm:main}, we can now combine the dynamics of the modular components of a network (Fig.~\ref{fig:running_eg}de) to compute the global network dynamics. We show the complete and detailed computation in Fig.~\ref{fig:egF1F2}. Since the first two terms have identical second factors  (Fig.~\ref{fig:egF1F2}b), we can factor it. Then, the two terms resulting from the factoring (Fig.~\ref{fig:egF1F2}c) can be added, resulting in a shorter representation of the dynamics of the full system (Fig.~\ref{fig:egF1F2}d). Using the product from Definition~\ref{def:prod} results in the graphs in Fig.~\ref{fig:egF1F2}e. Finally, adding terms with overlapping elements yields a graph that is the state transition graph of the original system $F$, Fig.~\ref{fig:egF1F2}f.

\begin{figure}[h]
\centerline{ \hbox{ 
  \fig[scale=1.5]{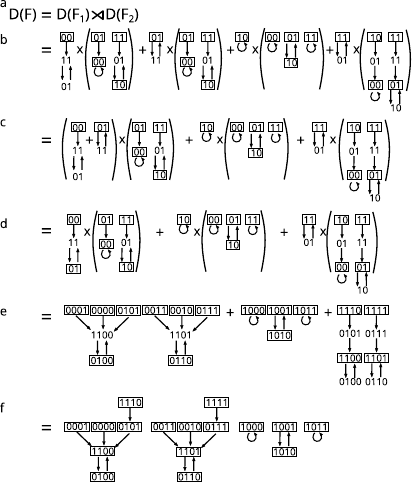}
}}
 \caption{ Illustrating Theorem~\ref{thm:main}. Using the semiring structure we can compute the dynamics of the full network using the dynamics of its modules.
}
 \label{fig:egF1F2}
\end{figure}

In practice, many of the factors $\mathcal{D}(F_2^{T_{\xx_1,F_1}})$ in the terms of $\mathcal{D}(F_1) \rtimes \mathcal{D}(F_2)$ can be identical for different $T_{\xx_1,F_1}$ (eg. see first and second terms in Fig.~\ref{fig:egF1F2}b). This can occur especially if the connection from $F_1$ to $F_2$ is sparse or canalizing. For such cases, we can factor  $\mathcal{D}(F_2^{T_{\xx_1,F_1}})$ from such terms (eg. Fig.~\ref{fig:egF1F2}c). This will result in $\mathcal{D}(F_1) \rtimes \mathcal{D}(F_2)$ having fewer terms, simplifying calculations.

\section{Conclusion}
In this paper, we have extended the mathematical formalism to represent and investigate the precise relationship between structure and dynamics of Boolean networks.  Specifically, we showed that the decomposition of a Boolean network into modules induces a corresponding decomposition of its entire state transition graph (i.e., all trajectories), not only its attractors. By endowing the space of possible dynamics with a semiring structure, we obtained a factorization of network dynamics in terms of the dynamics of its constituent modules. 

From a systems biology perspective, this result extends modular analysis beyond phenotypes, represented by attractors \cite{kadelka2023modularity,jarrah2010dynamics,carranza2025cattheory}, to the transient trajectories that underlie processes such as differentiation, reprogramming, signal transduction, and responses to perturbations. Because biological regulatory networks are often organized into relatively weakly coupled modules, our framework suggests new approaches for scalable simulation, analysis, and model reduction, where module-specific dynamics can be studied independently and subsequently combined to recover the dynamics of the full system.

The decomposition also provides a foundation for modular approaches to control and inference. Building on previous work on modular attractor control~\cite{murrugarra2025modular}, our trajectory-level decomposition naturally supports state-dependent and optimal control strategies. Likewise, it provides a foundation for a modular approach to reverse engineering from time-course data, in which individual modules are inferred separately and then assembled to reconstruct global network dynamics~\cite{wheeler2024modular}. More broadly, our results provide a mathematical foundation for studying how the modular organization of biological networks shapes their dynamical behavior.

\bibliographystyle{unsrt}
\bibliography{ref}

\end{document}